\newtheorem{theorem}{Theorem}
\newtheorem{example}[theorem]{Example}
\newenvironment{proof}[1][Proof]{\textbf{#1.} }{\ \rule{0.5em}{0.5em}}
\title{Quasilinear differential constraints for parabolic systems of Jordan-block type}
\author{\vspace{3mm}Alessandra Rizzo$^1$, Pierandrea Vergallo$^{2,3}$ \\ 
\small $^1$  Department of Mathematical, Computer,
\small  Physical and Earth Sciences\\
\small  University of Messina, \\
\small    \texttt{alessandra.rizzo1@unime.it} \\
\small $^2$  Department of Engineering,\\ \small University of Messina, \\
\small  
\texttt{pierandrea.vergallo@unime.it}\\
\small $^3$ {Istituto Nazionale di Fisica Nucleare, Sez.\ Milano} }
\date{}
\begin{document}
\maketitle

\vspace{10mm}

\begin{abstract}
    We prove that linear degeneracy is a necessary conditions for systems in Jordan-block form to admit a compatible quasilinear differential constraint. Such condition is also sufficient for $2\times 2$ systems and turns out to be equivalent to possess the Hamiltonian property.  Some explicit solutions of parabolic systems are herein given: two principal hierarchies arising from the associativity theory and the delta-functional reduction of the El's equation in the hard rod case are integrated.

    \vspace{5mm}

    \textbf{Keywords:} Jordan-type systems, Differential constraints, Linear degeneracy, Solutions to parabolic systems
\end{abstract}

\newpage 
\section{Introduction}
Modern physics makes an extensive use of nonlinear Partial Differential Equations (PDEs) to model a large number of phenomena \cite{Logan,Scott}.  Here, we investigate nonlinear systems of $n$ first order PDEs which are homogeneous and quasilinear (earlier known as systems of \emph{hydrodynamic type}):
\begin{equation}
    \label{parsys}
    u^i_t=A^i_j(\textbf{u})u^j_x, \qquad i=1,2,\dots n,
\end{equation}
 where $t$ and $x$ are the independent variables, $u^i(t,x)$ ($i=1,2,\dots n$) are the field variables and the matrix $A^i_j$ depends on $\textbf{u}=(u^1,\dots u^n)$, but not on higher order derivatives. 
 
Systems \eqref{parsys} are commonly classified with respect to the eigenvalues of $A$. In particular, we say \eqref{parsys} to be hyperbolic if the eigenvalues are real with $n$ linearly independent eigenvectors; whereas it is said parabolic if it admits multiple real eigenvalues with one eigenvector for each; and finally, elliptic if its eigenvalues are complex.  A huge literature has been developed with regard to hyperbolic quasilinear systems, see for example \cite{Serre} for systems of conservation laws, \cite{DN83} for the study of the Hamiltonian property and \cite{Tsarev,Tsarev1} for their integrability. However, as far as the authors know, some aspects of the other cases still remain to be understood.  

In this paper, we focus on quasilinear systems of parabolic type, where we assume that the matrix $A$ has upper-triangular Toeplitz form,
\begin{equation}\label{J}
A=\lambda^1 \mathbb{I}+\sum_{i=1}^{n-1}\lambda^{i+1}P^i;
\end{equation}
here  $\mathbb{I}$ is the $n\times n$ identity matrix, $P$ is the $n\times n$ Jordan block with zero eigenvalue (note that $P^n=0$), and $\lambda^i$ are  functions of $\textbf{u}$. 
 Such systems are non-diagonalizable whenever $P^i\neq 0$. Our interest on this case arises from a number of applications recently shown in different fields: in the parabolic regularisation of the Riemann equation \cite{KO2}, in the reductions of hydrodynamic chains and linearly degenerate dispersionless PDEs in 3D \cite{P2003}, in Nijenhuis geometry \cite{BKM},  in the context of non-semisimple prepotential in WDVV theories \cite{LP} and in the novel developments on the El's kinetic equation under delta-functional reduction \cite{PTE, FP} and their Hamiltonianity \cite{VerFer1,VerFer2}. 

Let us conclude this introduction remarking that linearly degenerate systems play a key role in our framework, especially from the point of view of solvability of the initial value
problem \cite{R1}. For this reason, we recall that equations \eqref{parsys} in the strictly hyperbolic case are said linearly degenerate if each eigenvalue (also known as \emph{characteristic speed}) is constant along the direction of the corresponding eigenvector. In \cite{Fer}, a criterion of linear degeneracy was introduced that makes not explicit use of eigenvectors and eigenvalues, but is written in term of its characteristic polynomial
$$
P(\lambda)=det(\lambda \mathbb{I}-A)={\lambda  }^n  +  f_1(\textbf{u}){\lambda
}^{n-1} +f_2({\textbf{u}}){\lambda}^{n-2}+ \ldots + f_n({\textbf{u}}).
$$
Then, the condition of linear degeneracy can be equivalently expressed in the following form (see \cite{Fer})
\begin{equation}\label{lindeg}
\nabla f_1~A^{n-1}+\nabla  f_2~A^{n-2}+\ldots  +\nabla  f_n=0,
\end{equation}
where   $\nabla f=(\frac{\partial
f}{\partial u^1},\ldots , \frac{\partial f}{\partial u^n})$ is the gradient, and $A^k$ denotes  $k$-th power of the matrix $A$.  The present criterion is taken here (but see also \cite{VerVit2,VerFer1}) as definition for quasilinear systems not only in the strictly hyperbolic case.  

It turns out that for systems in Jordan-block form the following result holds:
 \begin{theorem}\label{thm0} { Suppose that the matrix $A$ has block-diagonal form with several  blocks  $J_{\alpha}$ of type (\ref{J}) with distinct eigenvalues $\lambda^1_{\alpha}$. Then the condition of linear degeneracy is equivalent to 
\begin{equation}\label{l1}
\frac{\partial \lambda^1_{\alpha}}{\partial u^1_{\alpha}}=0 \qquad \forall \alpha, 
\end{equation}
no summation}.\end{theorem}

This result was firstly shown in \cite{VerFer1} and used therein to prove that linearly degeneracy is a necessary property for system in Jordan-block form to be Hamiltonian. We will make a remarkable use of this Theorem also in the present paper. 
 
\vspace{5mm}

The focus of our study is double: we want to find solutions to some parabolic quasilinear systems in Toeplitz form by using the method of differential constrains and to relate the applicability of such a reduction procedure to intrisic geometric properties of the system, as its linear degeneracy and Hamiltonianity. To these aims, in Section \ref{2} we describe the general framework behind the method of differential constraints and we analyse systems in $n=2$ and $n>2$ number of components. In the present case, it turns out that linear degeneracy is a necessary property to apply such a procedure (as well as to be Hamiltonian). In Section \ref{4}, we use the obtained results to compute exact solutions for physical examples (arising in the contest of WDVV equations and the kinetic equations for soliton gas theories). Finally, in Section \ref{5} we show some preliminary results on the connection between the Hamiltonian property of the systems and the underlined method.

\section{The method of differential constraints for systems in Toeplitz form}\label{2}

Searching for solutions of nonlinear equations is not an easy task in general. Different methods and algorithms appeared in the literature at this aim, making use of algebro-geometric approaches to the system. We refer to \cite{2a} for a complete reference on the subject. Here, we focus on the method of differential constraint, that was firstly introduced by N.N. Yanenko in 1964 \cite{1a} and has shown an increasing number of applications in the field of nonlinear waves \cite{2a,3a,4a,5a,ManRizVer1, ManRiz4, ManRiz1}. In the following, we briefly recall the main ideas and steps of the present method. 

Let $u_t^i=f^i(\textbf{u},\textbf{u}_x,\dots, \textbf{u}_{kx})$ be a given system of $N$ evolutionary equations (note that this can be easily generalized to arbitrary PDEs). The idea of the method is simply to append to the system a set of $M$, with $M<N$, differential equations
$\Phi^j(x,t,\textbf{u},\textbf{u}_x,\dots, \textbf{u}_{kx})=0$, which play the role of constraints, selecting a class of exact solutions of the initial system.  In fact, we look for the exact particular solutions of the original systems which satisfy also the appended differential constraints. After choosing the form of the constraints, one needs to solve the compatibility condition among the initial system and the new differential relations, requiring the involutivity of the obtained overdetermined system. In \cite{Zhi}, it was proved that for strictly hyperbolic quasilinear system,
 the more general first-order  differential constraints that can be appended to the system are quasilinear and take the form
\begin{equation}
 l_{j}^{i} u_{x}^{j}= p^{i},
\end{equation}
where $l^{i}$ are the left eigenvectors of the initial system and $p^i$ are source functions to be determined according to the compatibility analysis.\\
Once that the consistency conditions are satisfied, one can look for solutions of the starting system with the auxiliary differential conditions that commonly simplify the computations required. 

We now specify the application of this procedure to parabolic system in Jordan-block form, dividing our investigation in two subcases: $n=2$ and $n>2$. 

\subsection{Systems in $n=2$ components} 
Let us start from the easiest case of $2\times 2$ systems of Toeplitz form in the field variables $u^1$,$u^2$, i.e. 
\begin{equation}\label{qsj2}
\begin{cases}u^1_t=\lambda({u^1},{u^2})u^1_x+\mu({u^1},{u^2})u^2_x\\u^2_t=   \hphantom{\lambda({u^1},{u^2})u^1_x+} \lambda({u^1},{u^2})u^2_x
\end{cases}
\end{equation} where $\lambda=\lambda(u^{1},u^{2})$ and $\mu=\mu(u^{1},u^{2})$ are arbitrary functions.  In this case, we add to the system \eqref{qsj2} the nonlinear differential constraint
\begin{equation}\label{74}
    \Phi(x,t,u^1,u^2,p,q)=0,
\end{equation}
where $p=u^1_x$, $q= u^2_x$ and we also define $r=u^1_t,s=u^2_t$. In \eqref{74} we reasonably required the constraint to be of the same order of the system, i.e. $\Phi$ does not depend on derivatives $\textbf{u}_{kx}$ with $k>1$. In this case, we prove that only quasilinear constraints are admitted:

\bigskip

\begin{theorem}\label{01}
    For $2\times 2$ systems in Jordan-block form \eqref{qsj2}, the differential constraint \eqref{74} is quasilinear:
    \begin{equation*}
        0=\Phi(x,t,u^1,u^2,p,q)=u^2_x-\varphi(x,t,u^1,u^2)
    \end{equation*}
    \label{thmn2}
\end{theorem}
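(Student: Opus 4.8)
The plan is to impose involutivity of the overdetermined system formed by appending the constraint \eqref{74} to \eqref{qsj2}, and to extract the desired form from the behaviour of the highest-order derivatives. Writing the system as $r=\lambda p+\mu q$ and $s=\lambda q$, I would differentiate $\Phi=0$ once by $t$ and once by $x$ (total derivatives along solutions, denoted $D_t\Phi$ and $D_x\Phi$). Using $p_t=r_x$ and $q_t=s_x$ together with $r_x=\lambda_x p+\lambda\,u^1_{xx}+\mu_x q+\mu\,u^2_{xx}$ and $s_x=\lambda_x q+\lambda\,u^2_{xx}$, both differentiated relations bring in the second-order derivatives $u^1_{xx},u^2_{xx}$, which are not fixed by the first-order data $(u^1,u^2,p,q)$.

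The key step is to isolate the second-order part of $D_t\Phi=0$, namely $\lambda\Phi_p\,u^1_{xx}+(\mu\Phi_p+\lambda\Phi_q)\,u^2_{xx}$, and to compare it with the single relation $\Phi_p\,u^1_{xx}+\Phi_q\,u^2_{xx}=-(\Phi_x+\Phi_{u^1}p+\Phi_{u^2}q)$ coming from $D_x\Phi=0$. Assuming $\Phi_q\neq0$, I eliminate $u^2_{xx}$; the coefficient of the remaining free derivative $u^1_{xx}$ then collapses to $-\mu\,\Phi_p^2/\Phi_q$. Since no further relation of the same order is available to absorb this term, involutivity forces it to vanish, and because $\mu\neq0$ for a genuine (non-diagonalizable) Jordan block---if $\mu=0$ the system \eqref{qsj2} decouples into two copies of the transport equation---I conclude $\Phi_p=0$.

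To finish, I would rule out the degenerate alternative $\Phi_q=0$: if $\Phi_q=0$ while $\Phi_p\neq0$, solving $D_x\Phi=0$ for $u^1_{xx}$ instead leaves the term $\mu\Phi_p\,u^2_{xx}$ with nonzero coefficient, so $u^2_{xx}$ survives as an uncontrolled second derivative and the constraint cannot be preserved by the flow. Hence $\Phi_q\neq0$, and together with $\Phi_p=0$ the constraint depends only on $(x,t,u^1,u^2,q)$ with $\Phi_q\neq0$. The implicit function theorem then solves $\Phi=0$ for $q=u^2_x$, yielding $\Phi=0\iff u^2_x-\varphi(x,t,u^1,u^2)=0$, which is exactly the quasilinear form asserted.

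The principal difficulty is organizational rather than conceptual: one must track the two second derivatives $u^1_{xx},u^2_{xx}$ carefully and use that $D_x\Phi=0$ supplies only \emph{one} relation between them, so that precisely one of them remains free and its coefficient must be annihilated. The nondegeneracy hypothesis $\mu\neq0$ is indispensable at that point, and the short case split on whether $\Phi_p$ or $\Phi_q$ vanishes is the only other place demanding care.
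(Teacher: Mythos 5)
Your proposal is correct and follows essentially the same route as the paper: both prolong the constraint via $D_x\Phi=0$, $D_t\Phi=0$ together with the differentiated evolution equations, and both reduce compatibility to the vanishing of the quantity $\mu\Phi_p^2$ (your coefficient $-\mu\Phi_p^2/\Phi_q$ obtained by eliminating $u^2_{xx}$ is exactly the paper's condition $\det D=\mu\Phi_p^2=0$, computed by hand elimination instead of as a rank condition on the $6\times 6$ system of second-order derivatives). Your explicit case split ruling out $\Phi_q=0$ is a small refinement that the paper leaves implicit in its ``without loss of generality'' step, but it does not change the substance of the argument.
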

\begin{proof}
    In order to integrate the system we look for a constraint 
    \begin{equation}
        \Phi(x,t,u^{1}, u^{2},p,q)=0.
        \label{constraint}
    \end{equation}
We want to study the compatibility between the constraint and the system, so we derive ($\ref{constraint}$) with respect to $x$ and $t$ and, taking into account the equations of the initial system, we obtain the following algebraic conditions for the second order derivatives:
\begin{subequations}
\label{compatibilityconditions1}
\begin{align*}
\label{compatibilityconditions}
&\Phi_x+ \Phi_{u^1} p+ \Phi_{u^2} q+ \Phi_p u^1_{xx}+ \Phi_q u^2_{xx}=0\\
&\Phi_t+ \Phi_{u^1} r + \Phi_{u^2} s+ \Phi_p u^1_{tx}+ \Phi_q u^2_{tx}=0\\
&u^1_{xt}- \lambda u^1_{xx}- \mu u^2_{xx}= \lambda_{u^{1}} p^{2}+ (\lambda_{u^{2}} +\mu_{u^{1}}) p q +\mu{u^{2}} q^{2}\\
&u^1_{tt}- \lambda u^1_{xt}- \mu u^2_{xt}= \lambda \lambda_{u^{1}} p^{2}+ (\mu  \lambda_{u^{1}}+ \lambda \lambda_{u^{2}} +\lambda \mu_{u^{1}}) p q+ (\mu \mu_{u^{1}}+\lambda \mu_{u^{2}}) q^2 \\
&u^2_{xt}- \lambda u^2_{xx}=\lambda_{u^{1}} p q+ \lambda_{u^{2}}  q^2\\
&u^2_{tt}- \lambda u^2_{xt}=\lambda \lambda_{u^{1}} p q+(\mu \lambda_{u^{1}}+ \lambda \lambda_{u^{2}} ) q^{2} 
\end{align*}
\end{subequations}
Since each solution of $ (\ref{constraint}) $ depends on one arbitrary function we require the same level of arbitrariness for the solutions of the system above. Hence, we need to require that the matrix $D$ associated to the second order derivatives has rank less than 6. This leads to the condition
\begin{equation*}
   0=\det D = \mu \Phi_{p}^2.
\end{equation*}
Then we have that our constraint cannot depend on $u_{x}^{1}$, so, without loss of generality, we can rewrite \eqref{74} as 
\begin{equation*}
    u^2_x= \varphi(x,t,u^1,u^2). \label{formavincolo}
\end{equation*}
\end{proof}

\bigskip

The previous Theorem let us deduce that, as in the hyperbolic case, quasilinear differential constraints play a key role for hydrodynamic type systems.  As a result, we use Theorem \ref{01} to prove the following

\bigskip

\begin{theorem}\label{thm3}
    Linear degeneracy is a necessary and sufficient condition for quasilinear systems in Jordan-block form \eqref{qsj2} to admit a differential constraint.
\end{theorem}
\begin{proof}
To prove this result, we complete the study of compatibility between the constraint and the system simply by requiring that the Schwartz condition
\begin{equation*}
    u_{xt}^{2}=u_{tx}^{2}
\end{equation*}
is satisfied. Then, we get a polynomial in $u_{x}^{1}$ that is satisfied for each choice of $u^1$ if and only if
\begin{subequations}
\begin{align}
    &\lambda_{u^{1}}=0 \label{10a} \\
     & \lambda_{u^2} (\varphi^1)^2+ \lambda \varphi_{x}^{1}= \varphi_{t}^{1}+ \mu \varphi^1\varphi_{u^1}^{1}. \label{secondcondition}
\end{align}
\end{subequations}
Hence if the system is linearly degenerate, it admits a differential constraint whose form is given by the integration of \eqref{secondcondition}. Viceversa, every quasilinear differential constraint satisfies condition (\ref{10a}, \ref{secondcondition}) and $\lambda$ must depend only on $u^2$, resulting to be linearly degenerate. \end{proof}  

\bigskip

The present result shows a concrete connection between the linear degeneracy property and the applicability of the method of differential constraints. In addition, we will see in Section \ref{5} that this result has an analogue when investigating the Hamiltonian structures in $n=2$ Jordan-block systems.

We finally remark that in the particular case when $\varphi_{x}^{1}=\varphi_{t}^{1}=0$, the second condition simply becomes
\begin{equation*}
    \lambda_{u^2}(u^2)=  \mu(u^1,u^2) \dfrac{\varphi^1_{u^1}}{\varphi^1},  
\end{equation*} so that 
\begin{equation}\label{formt}
    \varphi^1(u^1,u^2)=f^1(u^2) \: e^{\displaystyle \int{\frac{\lambda_{u^2}}{\mu}\, du^1}},
\end{equation}where $f^1(u^2)$ is an arbitrary function. We will show in the next Section that the requirement that $\varphi^{1}$ does not depend explicitly on $t$ and $x$ turns out to be useful in the applications, that is searching for solutions of the systems.

\subsection{Systems in $n>2$ components}\label{subse2}
In the previous sub-section we considered the case of $2 \times 2$ systems and we proved that differential constraints are strictly related to linearly degenerate property.
Here, we investigate the differential constraints for systems of $l>1$ blocks of the form
\begin{equation}
    \label{qldc}
u^{k_\alpha}_{\alpha,x}=\varphi^\alpha(x,t,\textbf{u}), \qquad \alpha=1,2,\dots l,
\end{equation}
where we indicate with $k_\alpha$ the length of the $\alpha$-th block and $u^{k_\alpha}_\alpha$ the $k_\alpha$-th field variable in the $\alpha$-th block. As an example, in the system
\begin{equation*}
    \begin{pmatrix}
        u^1_1\\u^1_2\\u^2_2\\u^3_2
    \end{pmatrix}_t=
    \begin{pmatrix}
        \lambda^1&0&0&0\\
        0&\lambda^2&\mu^2&\eta^2\\
        0&0&\lambda^2&\mu^2\\0&0&0&\lambda^2
    \end{pmatrix}\begin{pmatrix}
         u^1_1\\u^1_2\\u^2_2\\u^3_2
    \end{pmatrix}_x,
\end{equation*}
there are two Jordan-blocks $i=1,2$ of length respectively $k_1=1,k_2=3$, so that we indicate with $u^{1}_1=u^1$ and $u^{3}_2=u^4$ and we consider the quasilinear constraints with respect to the last variables of each block
$$u^1_{1,x}=u^1_x=\varphi^1(x,t,\textbf{u}),\qquad u^3_{2,x}=u^4_x=\varphi^2(x,t,\textbf{u}).$$

\medskip 

Finally, we consider the case of parabolic systems composed by different Jordan blocks of arbitrary dimensions.  In what follows, we will consider a differential constraint $\Phi^\alpha$ for each Jordan block $\alpha$ of $A$. An analogue (but weaker) result of Theorem \ref{thm3} is obtained.

\medskip

\begin{theorem}\label{thm4}
    Linear degeneracy is a necessary condition for quasilinear systems of Jordan-block type \eqref{J} to admit quasilinear differential constraints \eqref{qldc}.
\end{theorem}
\begin{proof}
    Let us consider for sake of simplicity a system with 2 Jordan blocks of length $k,m$ and eigenvalues $\lambda_1=\lambda_{1}^{1},\lambda_2=\lambda_{2}^{1}$ respectively. The proof can be easily generalized for an arbitrary number blocks. In this case, the system reads as
    \begin{equation*}
    \begin{cases}\displaystyle 
         u_{1,t}^{i}= \sum_{j=i}^{k} \lambda_1^{j-i+1}(\textbf{u})\, u_{1,x}^{j}  \mathrm{\: } \qquad i=1, \dots ,k\\
        \displaystyle u_{2,t}^{s}= \sum_{j=s}^{m} \lambda_2^{j-i+1}(\textbf{u})\, u_{2,x}^{j}\qquad s=1, \dots ,m.
     \end{cases},
      \end{equation*}along with the constraints
      \begin{equation}\label{questiqui}
          u^k_{1,x}=\varphi^{1}(t,x,u^1,\dots u^{k+m}), \qquad u^m_{2,x}=\varphi^2(t,x,u^1,\dots u^{k+m}).
      \end{equation}Note that $k+m=n>2$. We finally require the compatibility conditions \begin{align*}
     u_{1,tx}^{k}= u_{1,xt}^{k}, \qquad 
      u_{2,tx}^{m}=u_{2,tx}^{m} .
\end{align*} to be satisfied. Computing the total derivatives and substituting $u^i_t=A^i_ju^j_x$ and the constraints \eqref{questiqui}, we get a polynomial in the first order derivatives of ${u}^{1}, {u}^{2}$. We get the compatibility conditions

\smallskip 

     \begin{subequations}\label{22}
\begin{align}
  \displaystyle  &\frac{\partial \lambda^1_1}{\partial u_1^1}  =\frac{\partial \lambda^1}{\partial u^1}=0,\label{eq111}\qquad  \frac{\partial \lambda^1_2}{\partial u^1_2}=\frac{\partial \lambda^2}{\partial u^{k+1}}=0,\end{align}
  \medskip
  
  with the following additional relations
  \begin{align}
   \displaystyle &\sum_{l=1}^{j} \dfrac{\partial \varphi^{1}}{\partial u_{1}^{j-l+1}} \lambda_{1}^{l}=\dfrac{\partial \lambda_{1}^{1} \varphi^{1}}{ \partial u_{1}^{j}} \qquad \mathrm{for \: each \:} j=2, \dots, k-1, \vspace{0.2 cm} \\
    \displaystyle &\sum_{l=1}^{s} \dfrac{\partial \varphi^{1}}{\partial u_{2}^{s-l+1}} \lambda_{2}^{l}=\dfrac{\partial \lambda_{1}^{1} \varphi^{1}}{ \partial u_{2}^{s}} \qquad \mathrm{for \: each \:} s=1, \dots, m-1,\\
\displaystyle  &\varphi_{t}^{1}+   \sum_{l=1}^{k}  \dfrac{\partial \varphi}{\partial u_{1}^{k-l+1}} \lambda_{1}^{l} \varphi^{1}+ \sum_{l=1}^{m} \dfrac{\partial \varphi^{1}}{\partial u_{2}^{m-l+1}}  \lambda_{2}^{l} \varphi^{2}=  \varphi^{1} \left(\dfrac{\partial \lambda_{1}^{1} \varphi^{1}}{ \partial u_{1}^{k}} \right)+ \varphi^{2} \left(\dfrac{\partial \lambda_{1}^{1} \varphi^{1}}{ \partial u_{2}^{m}} \right)+ \lambda_{1}^{1} \varphi_{x}^{1},\\
  \displaystyle &\sum_{l=1}^{j} \dfrac{\partial \varphi^{2}}{\partial u_{1}^{j-l+1}} \lambda_{1}^{l}=\dfrac{\partial \lambda_{2}^{1} \varphi^{2}}{ \partial u_{1}^{j}} \qquad \mathrm{for \: each \:} j=2, \dots, k-1, \vspace{0.2 cm} \\
 \displaystyle  &\sum_{l=1}^{s} \dfrac{\partial \varphi^{2}}{\partial u_{2}^{s-l+1}} \mu^{l}=\dfrac{\partial \mu^{1} \varphi^{2}}{ \partial  u_{2}^{s}} \qquad \mathrm{for \: each \:} s=1, \dots, m-1,\\
  \displaystyle  &\varphi_{t}^{2}+   \sum_{l=1}^{k}  \dfrac{\partial \varphi^{2}}{\partial u_{1}^{k-l+1}} \lambda_{1}^{l} \varphi^{1}+ \sum_{l=1}^{m} \dfrac{\partial \varphi^{2}}{\partial  u_{2}^{m-l+1}}  \lambda_{2}^{l} \varphi^{2}= \varphi^{1} \left(\dfrac{\partial \lambda_{2}^{1} \varphi^{2}}{ \partial  u_{1}^{k}} \right)+ \varphi^{2} \left(\dfrac{\partial \lambda_{2}^{1} \varphi^{2}}{ \partial u_{2}^{m}} \right)+ \lambda_{2}^{1} \varphi_{x}^{2}.
\end{align}
\end{subequations}

\medskip

      In particular,  \eqref{eq111} implies the system to be linearly degenerate according to Theorem \ref{thm0}.

\end{proof}

\bigskip

At this point, we conclude this section remarking again that an analogous result in \cite{VerFer1} shows that linear degeneracy is also a necessary condition for quasilinear systems in Jordan-block form to be Hamiltonian with Dubrovin-Novikov operators. We remaind Section \ref{5} for further details.

\section{Applications}\label{4}
In this Section, we apply the reduction procedure described to obtain solutions to some examples. In particular, we will solve conditions \eqref{22} arising in the proof of the previous Theorem to integrate the given systems. 

\subsection{Systems in $2$ components} In \cite{XF}, L. Xue and E.V. Ferapontov observed that $2$-component systems in Toeplitz form are invariant in form under the change of variables 
\begin{equation*}
    u^1=F(u^1,u^2),\quad u^2=G(u^2),
\end{equation*}with $F$ and $G$ arbitrary functions of their arguments. Then, owing to Theorems \ref{thm0} and \ref{thm3}, a compatible differential constraint requires $\lambda=\lambda(u^2)$ so that the system \eqref{qsj2} is linearly degenerate and, in this case, we can map it to the following
\begin{equation*}
\begin{cases}
u^1_t=u^2u^1_x+\mu(u^1,u^2) \,u^2_x\\
u^2_t=u^2u^2_x
\end{cases}
\end{equation*}
Moreover, by applying the transformation $u^1=f(r^1,r^2)$ and $u^2=r^2$ the system becomes
\begin{equation*}
\begin{cases}f_{r^1}r^1_t+f_{r^2}r^2_t=r^2(f_{r^{1}}r^1_x+f_{r^2}r^2_x)+\mu \, r^2_x\\r^2_t=r^2r^2_x
\end{cases}
\end{equation*}Finally, dividing by $f_{r^1}$ and choosing $f_{r^1}=\mu$ the system is mapped into the following system where we choose for sake of simplicity the notations in $u^1$ and $u^2$ (see also \cite{XF}):
\begin{equation}\label{1}
\begin{cases}u^1_t=u^2u^1_x+u^2_x\\u^2_t=u^2u^2_x\end{cases}
\end{equation}

Therefore, without loss of generality, we can consider system \eqref{1}
along with the quasilinear constraint (according to Theorem \ref{thmn2} and formula \eqref{formt}) 
\begin{equation}
    u^2_x=\varphi^1(u^1,u^2)= f^1(u^2) \, e^{u^1}.
    \label{constraintn2}
\end{equation} We now firstly consider the second equation of the system. Through the method of characteristics, we obtain
\begin{equation*}
    u^2=u_{0}^{2}(\sigma), \qquad \sigma= x+u_{0}^{2}(\sigma) t.
\end{equation*}
In order to study the first equation of the system, we use the change of variables  $(x,t) \to (\sigma, \tau)$, where
$x= \sigma- u_{0}^2(\sigma) \tau$ and
        $t= \tau$.
Hence, we obtain
\begin{equation*}
u^1_\tau= f^1(u_{0}^{2}) \, e^{u^1}  .
\end{equation*}
By integration, we get
\begin{equation}\label{sol1}
  u^1(\sigma,\tau)=\log{\left(\dfrac{1}{e^{-u_{0}^{1}(\sigma)}- f^1(u_{0}^{2}) \, \tau}\right)},
\end{equation}
where $u_{0}^{1}(\sigma)$ is an arbitrary function in its argument. At the end, substituting in the constraint \eqref{constraintn2} we obtain that the functions $f^{1}(u_{0}^{2}), \, u_{0}^{1}, \, u_{0}^{2}$ satisfy
\begin{equation}\label{sol2}
  \dfrac{u_{0}^{2'}}{f^{1}(u_{0}^{2})}= e^{u_{0}^{1}}.
\end{equation}
Hence, we are able to provide a solution up to two arbitrary functions. This means that equations \eqref{sol1} along with \eqref{sol2} give the general integral of the system. In this way, we found a complete set of solutions of every Jordan-block system in Toeplitz form gettable via the method of differential constraint. 
\subsection{Systems in higher number of components}
We now apply the results of Subsection \ref{subse2} to systems of Jordan-block type in Toeplitz form in $n=3$ and $n=4$ components. The examples chosen come from different areas of physics, showing  the range of applicability of the method. 

Here, we use conditions \eqref{22} computed in the proof of Theorem \ref{thm3} and specify $\lambda^i$ for each case of study. 

\subsubsection{Principal hierarchies of Frobenius manifolds}
Let us consider an example of an integrable hierarchy of Jordan block type coming from the theory of associativity equations, also known as WDVV (Witten-Dijkgraaf-Verlinde-Verlinde) equations \cite{wdvv}. These equations firstly appeared in the contest of 2D topological field theory but they were widely studied, see e.g. \cite{DubFiel,wdvv1,wdvv2}.

A key role in the associativity equations is played by a function $F$ called \emph{pre-potential}. Here we consider a non-semisimple WDVV prepotential as follows 

$$
F(u^1,u^2,u^3)=\frac{1}{2}(u^1)^2u^3+\frac{1}{2}u^1(u^2)^2+\frac{1}{8}\frac{(u^2)^4}{u^3}.
$$
It appeared in \cite{PVV}, Section 6, and also  \cite{LP}, eqn (3.18). A standard procedure associates to $F$ two commuting systems (primary flows in the language of WDVV equations),

\begin{align*}
u^1_t=(F_{u^2u^3})_x, \quad u^2_t=(F_{u^2u^2})_x, \quad u^3_t=(F_{u^1u^2})_x, 
\\
u^1_s=(F_{u^3u^3})_x, \quad u^2_s=(F_{u^2u^3})_x, \quad u^3_s=(F_{u^1u^3})_x. 
\end{align*}   

which explicitly read as

\begin{equation*}
u^1_t=-\frac{3}{2}\frac{(u^2)^2}{(u^3)^2}u^2_x+\frac{(u^2)^3}{(u^3)^3}u^3_x, \quad u^2_t=u^1_x+3\frac{u^2}{u^3}u^2_x-\frac{3}{2}\frac{(u^2)^2}{(u^3)^2}u^3_x, \quad u^3_t=u^2_x, 
\end{equation*}
\begin{equation*}
u^1_s=\frac{(u^2)^3}{(u^3)^3}u^2_x-\frac{3}{4}\frac{(u^2)^4}{(u^3)^4}u^3_x, \quad u^2_s=-\frac{3}{2}\frac{(u^2)^2}{(u^3)^2}u^2_x+\frac{(u^2)^3}{(u^3)^3}u^3_x, \quad u^3_s=u^1_x. 
\end{equation*}

\medskip

Following \cite{VerFer1}, we apply the transformation of field variables $$u^1=-\frac{1}{u^3}, \qquad u^2=\frac{u^2}{u^3}, \qquad u^3=u^1+\frac{1}{2}\frac{(u^2)^2}{u^3},$$ such that these commuting systems are mapped into an upper-triangular Toeplitz form,
\begin{equation}\label{E1}
\left(\begin{array}{c}u^1\\ u^2\\u^3\end{array}\right)_t=
\left(\begin{array}{ccc}
u^2 &-u^1 & 0\\
0&u^2&-u^1\\
0&0&u^2
\end{array}\right)
\left(\begin{array}{c}u^1\\ u^2\\ u^3\end{array}\right)_x
\end{equation}
and 
\begin{equation}\label{sys2}
\left(\begin{array}{c}u^1\\ u^2\\u^3\end{array}\right)_s=
\left(\begin{array}{ccc}
-\frac{1}{2}(u^2)^2 &u^1u^2 & (u^1)^2\\
0&-\frac{1}{2}(u^2)^2 &u^1u^2\\
0&0&-\frac{1}{2}(u^2)^2 
\end{array}\right)
\left(\begin{array}{c}u^1\\ u^2\\ u^3\end{array}\right)_x.
\end{equation}

\vspace{3mm}

Both the systems are integrable, in the sense that they admit an infinite number of compatible Hamiltonian structures of Dubrovin-Novikov type (see \cite{VerFer1}). We now apply the reduction procedure of differential constraints to find solutions.

\bigskip 

\textbf{ System \eqref{E1}.}
          Let us consider equations in \eqref{E1} and the quasilinear differential constraint
           $   u^3_x= \varphi^1(u^1,u^2,u^3)
           $, and by the compatibility conditions we obtain
           \begin{equation*}
               \varphi^1(u^1,u^2,u^3)= \dfrac{f^1(u^3)}{u^1},
           \end{equation*}
          where $f^1(u^3)$ is an arbitrary function. Substituting into the system, it yields to
          $$u^1_t-u^2u^1_x=-u^1u^2_x, \qquad u^2_t-u^2u^2_x=-f^1(u^3),\qquad u^3_t-u^2u^3_x=0,$$ 
    hence, we have
    \begin{subequations}
    \begin{align}
        &u^3(t,x)= u_{0}^{3}(\sigma)
         \\
           &   u^2(t,x)=-f^1(u_{0}^{3}) t+ u_{0}^{2}(\sigma),
         \end{align}
         \end{subequations}
         where $u_{0}^{2}(\sigma),u_{0}^{3}(\sigma)$ are arbitrary functions depending on the variable $\sigma$ 
 implicitly defined by the characteristic equation 
         \begin{equation*}
             x=f^1(u_{0}^{3}(\sigma)) \dfrac{t^2}{2}- u_{0}^{2}(\sigma) t+ \sigma.
         \end{equation*}
         Let us now apply the change of variables $(x,t) \to (\sigma, \tau)$, where $\tau=t$.
         The equation for $u^{1}$ becomes 
      \begin{equation*}
          \dfrac{u^1_\tau}{u^1}=\dfrac{f^{1'} u_{0}^{3'}  \tau - u_{0}^{2'} }{\dfrac{ f^{1'}}{2} u_{0}^{3'}  \tau^2- u_{0}^{2'}  \tau+1}.
      \end{equation*}
    Integrating with respect 
 to $\tau$
      \begin{equation}
          u^1(t,x)=\log \left(\dfrac{f^{1'}}{2} u_{0}^{3'} \tau^2- u_{0}^{2'} \tau+1 \right)+ u_{0}^{1}(\sigma).
      \end{equation}
      Substituting into the constraint, we have the additional condition
      \begin{equation}
          f^1(u_{0}^{3})= \dfrac{u_{0}^{3'} }{u_{0}^{1} },
      \end{equation}
     that is, given an initial datum, we can compute $f^1$.  We remark that there exist three arbitrary functions $u_{0}^{1}, \, u_{0}^{2}, \, u_{0}^{3} $ entering into the solution of the system. Then, we found a general integral of the equations.

\bigskip
      
   \textbf{System \eqref{sys2}.}
   We consider system \eqref{sys2}
          along with the quasilinear constraint $u^3_x= \alpha(u^1,u^2,u^3)$. From the compatibility conditions, we get
\begin{equation*}
    \varphi^1(u^1,u^2,u^3)= \dfrac{u^2}{u^1}f^1(u^3),
\end{equation*}
where $f^1(u^3)$ is an arbitrary function.
So that, we can rewrite the system as
\begin{align*}\begin{split}
&u^1_t+\frac{1}{2}(u^2)^2u^1_x=u^1u^2u^2_x+u^1u^2\,f^1(u^3),
\\&u^2_t+\frac{1}{2} (u^2)^2 u^2_x=(u^2)^2\,f^1(u^3), \qquad u^3_t+\frac{1}{2} (u^2)^2 u^3_x=0
\end{split}\end{align*}

          By integration along the characteristics, we obtain
          \begin{subequations}
          \begin{align}
              &u^3=u_{0}^{3}(\sigma)\\
               &   u^2= \dfrac{u_{0}^{2}(\sigma)}{1- f^1  \, u_{0}^{2}(\sigma) t},
             \end{align}
             \end{subequations}
            where $u_{0}^{2}(\sigma),u_{0}^{3}(\sigma)$ are arbitrary functions. The variable $\sigma$ is defined by the characteristic equation           \begin{equation*}
                x= \dfrac{(u_{0}^{2})^{2} t}{2(1-f^{1} u_{0}^{2} t)}+ \sigma.
             \end{equation*}
             Finally, we introduce a change of variables $(x,t) \to (\sigma,\tau)$ (with $t=\tau)$ and in this way, taking also the constraint into account, we get to the following differential equation for $u^1$
             \begin{equation*}
                 u^1_\tau -\dfrac{f^{1} u_{0}^{2} }{1-f^{1} u_{0}^{2} \tau} u^{1}= \dfrac{[u_{0}^{2'} +(u_{0}^{2} ) ^2 \, \tau \, f^{1'} u_{0}^{3'}](u_{0}^{2} )^{2} f^{1}}{u_{0}^{3'}(1-f^{1}u_{0}^{2}  \tau)^4}
             \end{equation*}
             The solution for $u^1$ is given by
             \begin{equation}
                 u^1(t,x)= \dfrac{1}{1- f^{1} \tau u_{0}^{2}} \left\{
 u_{0}^{1}(\sigma) + \dfrac{(u_{0}^{2})^2 \, f^1 \, \tau [2 u_{0}^{2'}+ \tau u_{0}^{2} (f^{1'} u_{0}^{2} u_{0}^{3'}-f^{1}u_{0}^{2'})]}{2 u_{0}^{3'} (-1+f^{1}  u_{0}^{2} \tau)^2}\right\}.
             \end{equation}
            The functions $u_{0}^{1}, u_{0}^{2}$, $u_{0}^{3}$, and $f^{1}(u_{0}^{3})$ satisfy the constraint
            \begin{equation}
                f^{1}= \dfrac{u_{0}^{1}}{u_{0}^{2}}  u_{0}^{3'} ,
            \end{equation}
so that, also in this second case, we are able to produce a general integral of the system depending on the three arbitrary functions $u_0^1, u_0^2, $ and $u_0^3.$
 \medskip

\noindent \textbf{Remark.} In both the systems investigated, the solutions coming from our study depend on exactly $3$ arbitrary functions. This result could be useful in the applications: indeed different specific solutions can be found by fixing initial values for the field variables $u^1,u^2$ and $u^3$. 
     
\bigskip 

\subsubsection{Kinetic equation of soliton gas: the hard rod case}
The soliton gas theory is an emerging topic in Mathematical Physics due to its wide range of applications going from nonlinear optics and water waves to statistical mechanics, and has its foundations in the field of integrable systems \cite{ElRev}. A key role in this theory is played by the El's integro-differential kinetic equation describing the evolution of a dense soliton gas
\cite{El,EK, EKPZ}. In this section, we investigate the hard rod gas case, for which the kinetic equations read as 
\begin{equation*}\label{gas}
\begin{array}{c}
f_t+(sf)_x=0,\\
\ \\
s(\eta)=\eta-\mathop{\int}_{0}^{\infty}af(\mu)[s(\mu)-s(\eta)]\ d\mu,
\end{array}
\end{equation*}
where $f(\eta)=f(\eta, x, t)$ is the distribution function, $s(\eta)=s(\eta, x, t)$ is the associated transport velocity and $a $ is a constant.

Under the delta-functional ansatz $$f(\eta, x, t)=\sum_{i=1}^{m}u^i(x, t)\ \delta(\eta-\eta^i(x, t)),$$and a transformation of the field variables, the integro-differential system is reduced to a quasilinear one of Toeplitz type (see \cite{EKPZ,FP} for further details):
\begin{equation}\label{hrred}
\begin{cases}u^1_{\alpha,t}=\lambda^\alpha\, u^1_{\alpha,x}+\mu^\alpha \,u^2_{\alpha,x}\\
u^2_{\alpha,t}=\lambda^\alpha\, u^2_{\alpha,x}
\end{cases}\qquad  \alpha =1,2, \dots m.
\end{equation}

\bigskip

Let us now fix $m=2$, i.e. we deal with a $4\times 4$ quasilinear system. For sake of simplicity (due to the large computations needed in this Section) we choose to adopt the notation identifying the field variables  with $u^1,u^2,u^3$ and $u^4$. Then, the matrix $A$ of velocities has the following entries

 \begin{equation*}
      \lambda^{1}=- \dfrac{u^{3} u^{2}+ a u^{4}}{u^{3}+a}, \quad \lambda^{2}=-\dfrac{u^{1} u^{4}+ a u^{2}}{u^{1}+a}, \qquad \mu^{1}=\dfrac{u^{1} u^{3}-a^{2}}{u^{3}+a}, \quad \mu^{2}=\dfrac{u^{1} u^{3}-a^{2}}{u^{1}+a},
  \end{equation*}

and the system is endowed with the constraints
  \begin{equation}
      u_{x}^{2}=\varphi^{1}(u^{1}, u^{2}, u^{3}, u^{4}) \qquad u_{x}^{4}=\varphi^{2}(u^{1}, u^{2}, u^{3}, u^{4}) .
      \label{vincn3}
  \end{equation}
 
Studying the compatibility  as in the proof of Theorem \ref{thm4}, we get that the constraints need to satisfy
\begin{equation}
   \varphi^1= \dfrac{f^1(u^1_1,u^2_1,u^2_2)}{\lambda^2 - \lambda^1},\qquad 
   \varphi^2= \dfrac{f^2(u^2_1, u^1_2,u^2_2)}{\lambda^2 - \lambda^1},
   \label{compahardord}
   \end{equation}
where $f^1$ and $f^2$ are functions to be specified later. Moreover, we find the additional conditions
 \begin{subequations}\label{compatibilityhardord12}
 \begin{align}&  \lambda^1_{u^2_1} (\varphi^1)^2 + \varphi^2 (\lambda^1 \varphi^1)_{u^2_2}= \varphi^1_{u^1_1} \varphi^1 \mu^1
+ \mu^2 \varphi^1_{u^1_2} \varphi^2 + \lambda^2 \varphi^1_{u^2_2} \varphi^2,  \label{compatibilityhardord1}\\
& \lambda^2_{u^2_2} (\varphi^2)^2+\varphi^1 (\lambda^2 \varphi^2)_{u^2_1}=  \varphi^2_{u^1_1} \varphi^1 \mu^1+ \mu^2 \varphi^2_{u^1_2} \varphi^2+\lambda^1 \varphi^2_{u^2_1} \varphi^1  .\label{compatibilityhardord2}
\end{align}
\end{subequations} 
Then, taking the constraints \eqref{vincn3} into account, the equations \eqref{hrred} assume the form.
\begin{equation}
\label{sistemaiper}
    \begin{cases}
        u_{t}^{1}- \lambda^{1} u_{x}^{1}=\mu^{1} \varphi^{1}\\
        u_{t}^{2}-\lambda^{1} u_{x}^{2}=0\\
        u_{t}^{3}-\lambda^{2} u_{x}^{3}= \mu^{2} \varphi^{2}\\
        u_{t}^{4}-\lambda^{2} u_{x}^{4}=0.
    \end{cases}
\end{equation}
The resulting system is hyperbolic, with eigenvalues
$\nu^{(1,2)}= -\lambda^{1}$ and $ \nu^{(3,4)}=-\lambda^{2}$
and left eigenvectors
\begin{equation*}
    l^{(1)}=(1,0,0,0), \quad l^{(2)}=(0,1,0,0), \quad l^{(3)}=(0,0,1,0) \quad l^{(4)}=(0,0,0,1).
\end{equation*}
Once that the system is rewritten in a hyperbolic form, it is a well-known fact (see \cite{Zhi}) that the first order constraints are quasilinear and take form
\begin{equation*}
    l^{(i)} \cdot \textbf{u}_x= \varphi^{i}, \qquad i=1, \dots,  \, 4.
\end{equation*}

This means that if we can choose between
$u_{x}^{3}= \varphi^{3}$ or 
    $u_{x}^{1}=\varphi^{4}$.
Withouth loss of generality, we choose the first option so that we deal with the three constraints
\begin{equation*}
    u_{x}^{2}=\varphi^{1}, \qquad u_{x}^{3}=\varphi^{3}, \qquad u_{x}^{4}=\varphi^{2}.
\end{equation*}
We remark that when the number of the constraints is $N-1$ (given $N$ the number of the governing equations), the integration of the original systems of PDEs is reduced to that of a ODEs system written along the characteristic curves associated to the hyperbolic system under interest \cite{2a}.
With this approach we find supplementary compatibility conditions given by
\begin{equation}
    \varphi^{3}=\dfrac{f^{3}(u^{2}, u^{3}, u^{4})}{\lambda^{2}-\lambda^{1}} \label{compatibilityhardord3}
\end{equation}
along with
\begin{align}\begin{split}
   &\mu^{1} \varphi^{1} \varphi_{u^{1}}^{3}+ \mu^{2} \varphi^{2} \varphi_{u^{3}}^{3}-  \lambda_{u^{2}}^{2} \varphi^{1} \varphi^{3}-\lambda_{u^{4}}^{2} \varphi^{2} \varphi^{3}=\\&\hphantom{ciaocioaio}(\lambda^{2}-\lambda^{1}) \varphi^{1} \varphi_{u^{2}}^{3}+ (\mu^{1} \varphi^{2})_{u^{2}} \varphi^{1}+ (\mu^{1} \varphi^{2})_{u^{3}} \varphi^{3}+ + (\mu^{1} \varphi^{2})_{u^{4}} \varphi^{2}.
   \label{compatibilityhardrod4}\end{split}
\end{align}
\bigskip

Substituting the form of $\varphi^{1}, \varphi^{2}, \varphi^{3}$ given in \eqref{compahardord}, \eqref{compatibilityhardord3},  into \eqref{compatibilityhardord1}, \eqref{compatibilityhardord2}, \eqref{compatibilityhardrod4}, we get that $f^1$, $f^2$ and $f^3$ satisfy
 \begin{subequations}
 \label{33}
\begin{align}
       &(u^{1}+a) f_{u^{1}}^{1}f^{1}+ (f^{1})^2+ (u^{2}- u^{4}) f_{u^{4}}^{1} f^{2} +f^{1} f^{2}=0 ,\label{compatibilitya}\\
   & (u^{3}+a) f_{u^{3}}^{2} f^{2}+ (f^{2})^{2}-  (u^{2}- u^{4})f_{u^{2}}^{2}f^{1} -f^{2} f^{1}=0, \label{compatibilityb}\\
    \begin{split}&(u^{3}+a)( f^{2}f_{u^{3}}^{3}- f^{3} f_{u^{3}}^{2})-  (u^{2}- u^{4}) f_{u^{2}}^{3} f^{1}+f^{3}f^{1}+ \\
    &\hphantom{cicociaocioa}- \frac{u^{3}+a}{u^{2}-u^{4}} [(u^{2}- u^{4})f_{u^{2}}^{2}f^{1}  -f^{2} f^{1}+  (u^{2}- u^{4}) f_{u^{4}}^{2} f^{2}  +(f^{2})^{2}]=0.
    \label{compatibilityc}\end{split}
\end{align}
 \end{subequations}
We can rewrite the system \eqref{sistemaiper} along the characteristics as

\begin{equation}
    \begin{cases}
        u_{t}^{1}-\lambda^{1} u_{x}^{1}= \mu^{1} \varphi^{1}\\
        u_{t}^{2}-\lambda^{1} u_{x}^{2}=0\\
        u_{t}^{3}-\lambda^{1} u_{x}^{3}= (\lambda^{2}-\lambda^{1}) \varphi^{3}+ \mu^{2} \varphi^{2}\\
        u_{t}^{4}- \lambda^{1} u_{x}^{4}=(\lambda^{2}-\lambda^{1}) \varphi^{2}
    \end{cases},
    \label{sistemacaratteristiche}
\end{equation}
with the additional relations \eqref{33}  on the structure of the constraints. We distinguish  two cases ($f_{u^{3}}^{2} \neq 0$ and $f_{u^{3}}^{2}=0)$, so that we are able to find two solutions to the hard rod case.

\bigskip

\textbf{Case 1.}
As a first case, we assume $f_{u^{3}}^{2} \neq 0$. Hence, from \eqref{compatibilitya} we can deduce
\begin{align*}
 & (u^{2}-u^{4})  f_{u^{4}}^{1}+ f^{1}=0\\
  &(u^{1}+a) f_{u^{1}}^{1}+f^{1} =0
\end{align*}
from which by integration we get
\begin{equation}
    f^{1}= \dfrac{c_{1}(u^{2}) (u^{2}- u^{4})}{u^{1}+a},
\end{equation}
where $c_1$ is an arbitrary function of its argument.
At this point, since $f_{u^{1}}^{1} \neq 0$, we get from \eqref{compatibilityb}
 \begin{align*}
      &(u^{2}- u^{4}) f_{u^{2}}^{2}- f^{2}=0\\
      &(u^{3}+a) f^{2}_{u^{3}}+f^{2}=0
 \end{align*}
 In this way, we can deduce the form of $f^{2}$, given by
 \begin{equation*}
     f^{2}= -\dfrac{ (u^{2}-u^{4}) c_{2}(u^{4})}{u^{3}+a},
 \end{equation*}
 where $c_2$ is an arbitrary function of its argument that we will choose constant, so that $c_2(u^4)=k \in \mathbb{R}$.\\
 Taking the form of $f^{1}$ and $f^{2}$ into account, we can rewrite the last compatibility condition as
 \begin{equation}
    (u^{3}+a) f_{u^{3}}^{3} f^{2}+f^{3} f^{2}=(u^{2}-u^{4}) f_{u^{2}}^{3} f^{1}- f^{3} f^{1}
    \label{ultima}
 \end{equation}
 To produce an exact solution, we simply assume $f^{3}=0.$ With this requirement, \eqref{ultima} is trivially satisfied. Under this assumption, system \eqref{sistemacaratteristiche} can be easily integrated and we obtain
\begin{equation}
\begin{cases}\label{sol2u}
    
    u^{1}=c_1\, u_{0}^{2}\; t+ u_{0}^{1}\\
    u^{2}= u_{0}^{2}\\
    u^{3}= -k t+h\\
    u^{4}=\dfrac{u_{0}^{2} \, k\,  t -(h+a) u_{0}^{4}}{k t- h-a}
    \end{cases},
\end{equation}
where $h$ is an arbitrary constant, $u^1_0=u^1_0(\sigma)$, $u^2_0=u^2_0(\sigma)$, $u^4_0=u^4_0(\sigma)$ are arbitrary functions of the variable $\sigma$ that is given from the characteristic equation
\begin{equation*}
    \dfrac{d x}{ d t}=-\lambda^{1} \label{eqcar}
\end{equation*}
and hence it is implicitly defined by
\begin{equation*}
x= \dfrac{k u_{0}^{2}(\sigma) t+ h u_{0}^{2}(\sigma)+a u_{0}^{4}(\sigma)}{-kt+h+a} t+ \sigma.
\end{equation*}
 Substituting the solutions into the differential constraints, we obtain two relations among the functions $u_{0}^{1}(\sigma),u_{0}^{2}(\sigma), u_{0}^{4}(\sigma)$ and $c_1(u_{0}^{2})$, i.e.

 \begin{equation*}
 \begin{cases}
        u_{0}^{2' }= c_1(u_{0}^{2}) \dfrac{a+ k_1}{h u_{0}^{1}-a^{2}} \vspace{0.25cm}\\
          u_{0}^{ 4'}=- \dfrac{k(u_{0}^{1}+a)}{h u_{0}^{1}-a^{2}}
         \end{cases},
 \end{equation*}
 while the constraint for $u_{0}^{3}$ is identically satisfied.\\
 This means that we are able to find a solution to the hard rod case depending on two arbitrary
functions.

\bigskip 

 \textbf{Case 2.}
 We now consider the case $f_{u^{3}}^{2}=0.$
Here,  it is easy to notice $f^{1}= f^{1}(u^{2}, u^{4})$.
Hence, the system of compatibility conditions becomes
\begin{align}
  \begin{split}
        &(f^{2})^2-(u^{2}-u^{4}) f_{u^{2}}^{2}f^{1}+ f^{2}f^{1}=0\\
        &(f^{1})^{2}+(u^{2}-u^{4}) f_{u^{4}}^{1}f^{2}- f^{1}f^{2}=0\\
      & (u^{3}+a) f^{2} f_{u^{3}}^{3}  -(u^{2}-u^{4}) f_{u^{2}}^{3}f^{1}+f^{3}f^{1}- 2  \dfrac{u^{3}+a}{u^{2}-u^{4}}(f^{2})^{2}- (u^{3}+a) f^{2} f_{u^{4}}^{2}=0
        \end{split}\label{condizionidicompatibilità0}
\end{align}
At this point, we require $f_{u^{3}}^{3}=0$, so that we are able to integrate the compatibility conditions above, from which
\begin{align*}
        f^{1}&=\dfrac{c_1(u^{2})^2 (u^{2}-u^{4})^2 }{c_1(u^{2}) +(u^{2}-u^{4}) c_{1}^{'}(u^{2})} \\
        f^{2}&=(u^{2}-u^{4})^{2} c_{1}(u^{2})\\
        f^{3}&= (u^{2}-u^{4}) c_{2}(u^{4})
    ,
\end{align*}
where $c_i, \, i=1,2$ are arbitrary functions of their arguments.
In order to produce a solution, we will make the further assumption $c_2(u^{4})=k \in \mathbb{R}.$
Now, we are able to integrate system \eqref{sistemacaratteristiche} and we obtain

\medskip

\begin{equation}
    \begin{cases}\label{sol2v}
        u^{1}= \dfrac{c_{1}^{2}(u_{0}^{2}) (u_{0}^{1}+a) (u_{0}^{2}-u_{0}^{4})t }{c_{1}(u_{0}^{2}) + (u_{0}^{2}- u_{0}^{4}) c_{1}'(u_{0}^{2})}+u_{0}^{1}\vspace{0.25cm}\\
        u^{2}=u_{0}^{2} \vspace{0.25cm}\\
        u^{3}= (u_{0}^{2}-u_{0}^{4})[c_1(u_{0}^{2}) (u_{0}^{3}+a)+k]t +u_{0}^{3} \vspace{0.25cm}\\
        u^{4}=\dfrac{u_{0}^{4}+ u_{0}^{2} (u_{0}^{2}-u_{0}^{4}) c_1(u_{0}^{2})t}{c_1(u_{0}^{2}) (u_{0}^{2}- u_{0}^{4})t+1}
    \end{cases},
\end{equation}
\medskip

where $u^1_0=u^1_0(\sigma)$,  $u^2_0=u^2_0(\sigma)$, $u^3_0=u^3_0(\sigma)$, $u^4_0=u^4_0(\sigma)$ are arbitrary functions and $\sigma$ is obtained from the characteristic equation \eqref{eqcar} and so it is implicily defined by
\smallskip
\begin{equation*}
    x=\dfrac{a}{k} \ln \left( \dfrac{(u_{0}^{3}+a)(1+ c_1(u_{0}^{2}) u_{0}^{2} t-  c_1(u_{0}^{2}) u_{0}^{4} t)}{u_{0}^{3}+a+( c_1(u_{0}^{2}) u_{0}^{3}+ c_1(u_{0}^{2})a+k)(u_{0}^{2}- u_{0}^{4}) t}\right)+ u_{0}^{2}t + \sigma.
\end{equation*}
\medskip

In the end, we substitute the solutions into the differential constraints. We find that the functions
$u_{0}^{1}(\sigma),u_{0}^{2}(\sigma), u_{0}^{3}(\sigma), u_{0}^{4}(\sigma)$ and $c_{1}(u_{0}^{2})(\sigma)$ must satisfy
\medskip

\begin{equation*}
    \begin{cases}
        u_{0}^{2'}=\dfrac{c_{1}^{2}(u_{0}^{2})(u_{0}^{1}+a)(u_{0}^{3}+a)(u_{0}^{2}- u_{0}^{4})}{(u_{0}^{1}u_{0}^{3}-a^{2})(c_1(u_{0}^{2}+c_{1}^{'}(u_{0}^{2}) (u_{0}^{2}-u_{0}^{4}))} \vspace{0.25cm}\\
        u_{0}^{3'}=\dfrac{k(u_{0}^{1}+a)(u_{0}^{3}+a)}{u_{0}^{1}u_{0}^{3}-a^{2}}\vspace{0.25cm}\\
        u_{0}^{4'}=\dfrac{c_{1}(u_{0}^{2})(u_{0}^{1}+a)(u_{0}^{3}+a)(u_{0}^{2}- u_{0}^{4})}{u_{0}^{1}u_{0}^{3}-a^{2}}
    \end{cases}
\end{equation*}
We can conclude that, also in this second case, we obtain a solution up to two arbitrary functions.
 \medskip

\noindent \textbf{Remark.} To simplify our computations, some assumptions have been taken in both the cases. In particular, $f_3=0$ in the first case and $f^3_{u^3}=c_{2,u^4}=0$ in the second. These additional constraints, together with the third quasilinear one $u^3_x=\varphi^3$, seem to reduce the number of arbitrary functions to two (from the ideal objective of four). 

In spite of this, we present two exact solutions to a relevant physical problem that, as far as the authors know, has very few analogues in the literature (one of them was obtained in \cite{FP}, see formula (13)). In particular, a concrete choice of the arbitrary functions could represent a novel result in soliton gas theory. This is the main topic of a forthcoming paper. 

\section{Hamiltonian structures}\label{5}
The Hamiltonian property for Jordan-block systems in Toeplitz form was deeply investigated by E.V. Ferapontov et al. in \cite{VerFer1}. In this Section, we show some analogies bewteen the Hamiltonian structure of such systems and quasilinear differential constraints. 

First of all, let us recall that a quasilinear system $u^i_t=A^i_j(\textbf{u})u^j_x$ is said Hamiltonian if there exists a Hamiltonian operator $P^{ij}=a^{ij\sigma}\partial_\sigma$ and a Hamiltonian functional $H=\int{h(u)\, dx}$ such that 
\begin{equation*}
    u^i_t=A^i_j(\textbf{u})u^j_x=P^{ij}\frac{\partial h}{\partial u^j} \qquad i=1,2,\dots n.
\end{equation*}
In our case, we ask the operator to be of Dubrovin-Novikov form $P^{ij}=g^{ij}\partial_x+b^{ij}_ku^k_x$, where $g^{ij},b^{ij}_k$ depend on the field variables only. The Hamiltonianity conditions for such an operator require, in the non-degenerate case $\det g\neq 0$,  $g$ to be a flat cometric and $b^{ij}_k=-g^{is}\Gamma^j_{sk}$, where $\Gamma^i_{sj}$ are the Christoffel symbols of $g$ \cite{DN83}. In addition, in \cite{Tsarev, Tsarev1}, S.I. Tsarev proved that the following conditions are necessary and sufficient for a quasilinear systems to be Hamiltonian 
\begin{equation}\label{tsa}
    g_{is}A^s_j=g_{js}A^s_i,\qquad \nabla_iA^j_k=\nabla_kA^j_i, \qquad i,j,k=1,2,\dots n,
\end{equation}where $\nabla$ is the covariant derivative with respect to the Levi-Civita connection of the flat metric $g$. 

Conditions \eqref{tsa} were used in \cite{VerFer1} to prove that  

\bigskip 

\begin{theorem}
Suppose that the matrix $A$ of system \eqref{parsys} has block-diagonal form with
several blocks of Toeplitz type with distinct eigenvalues.
Then the existence of Dubrovin-Novikov Hamiltonian structure implies linear degeneracy.
\end{theorem}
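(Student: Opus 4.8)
The plan is to obtain linear degeneracy from Tsarev's characterization \eqref{tsa} of Hamiltonianity and then translate it via Theorem \ref{thm0} into the form \eqref{l1}. Thus it suffices to show that the two conditions in \eqref{tsa}---the self-adjointness $g_{is}A^s_j=g_{js}A^s_i$ of $A$ with respect to the symmetric, nondegenerate cometric $g$, and the Codazzi-type symmetry $\nabla_iA^j_k=\nabla_kA^j_i$ of its covariant derivative---force $\partial\lambda^1_\alpha/\partial u^1_\alpha=0$ for every block. The first condition is purely algebraic and I would use it to pin down the shape of $g$; the second involves derivatives of $A$ and is where the eigenvalue derivative should surface.

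First I would determine the structure of $g$. Writing $A=\mathrm{diag}(J_1,\dots,J_l)$ in blocks of Toeplitz type \eqref{J} and $g$ in conforming blocks $g_{\alpha\beta}$, self-adjointness together with the symmetry of $g$ yields, for $\alpha\neq\beta$, the Sylvester equation $J_\alpha^{\top}g_{\alpha\beta}=g_{\alpha\beta}J_\beta$. Since the blocks carry distinct eigenvalues $\lambda^1_\alpha$, the spectra of $J_\alpha^{\top}$ and $J_\beta$ are disjoint, so the only solution is $g_{\alpha\beta}=0$: the cometric is block-diagonal. Within a single block, self-adjointness forces $g$ to be a symmetric Hankel (anti-triangular Toeplitz) matrix---constant along anti-diagonals with vanishing upper-left triangle---which follows from the fact that a single Jordan block is nonderogatory, so that $g=J\,c(A)$ for the exchange matrix $J$ and a polynomial $c$; nondegeneracy $\det g\neq 0$ then forces a nonzero corner anti-diagonal. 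In particular $A^1_1=\lambda^1_\alpha$ sits on the main diagonal while the corresponding corner entry of $g$ vanishes, a feature I expect to exploit below.

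Next I would invoke the second condition. Lowering an index and setting $C_{ij}:=g_{is}A^s_j$, the first condition makes $C$ symmetric, and metric compatibility $\nabla g=0$ turns $\nabla_iA^j_k=\nabla_kA^j_i$ into $\nabla_kC_{ij}=\nabla_jC_{ik}$; thus $C$ is a \emph{Codazzi tensor} (symmetric with totally symmetric covariant derivative). I would then evaluate this identity on the index configuration aligned with the corner of the block where $g$, and hence $C$, vanishes: there the bulk of the Christoffel contributions drop out, and the flatness of $g$ fixes the surviving metric derivatives. What remains should be a single scalar relation of the form $\partial_{u^1_\alpha}\lambda^1_\alpha=0$, since $\lambda^1_\alpha$ is the only eigenvalue datum appearing on the main diagonal of the block. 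Running this for every $\alpha$ gives \eqref{l1}, and Theorem \ref{thm0} converts it into linear degeneracy in the sense of \eqref{lindeg}.

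The main obstacle I anticipate is precisely this last extraction: one must verify that, with the anti-triangular Hankel form of $g$ and the Toeplitz form of $A$, the Christoffel-symbol terms in the Codazzi identity genuinely cancel or regroup---once flatness is imposed---so that exactly the derivative $\partial_{u^1_\alpha}\lambda^1_\alpha$ is isolated, with no residual coupling to the off-diagonal entries $\lambda^i_\alpha$ ($i\geq 2$) or to the neighbouring blocks. A secondary point to handle carefully is genericity: the relations in \eqref{tsa} hold pointwise in $\mathbf{u}$, so I must argue, using that the $\lambda^i_\alpha$ are functionally independent, that the resulting algebraic identity upgrades to the functional identity \eqref{l1} rather than holding only on a locus. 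This mirrors the role played by genericity in the differential-constraint arguments of Theorems \ref{thm3} and \ref{thm4}.
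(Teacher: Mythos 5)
Your strategy is exactly the one the paper relies on: the theorem is quoted from \cite{VerFer1}, where it is proved by feeding Tsarev's conditions \eqref{tsa} into the block-Toeplitz structure and invoking the criterion of Theorem \ref{thm0} at the end, and the paper's Section \ref{5} reproduces that computation for $n=2$. Your treatment of the first, purely algebraic condition is correct and complete: writing $gA=A^{\top}g$ blockwise gives the Sylvester equations $J_\alpha^{\top}g_{\alpha\beta}=g_{\alpha\beta}J_\beta$, distinctness of the eigenvalues kills the off-diagonal blocks, and within a block persymmetry of Toeplitz matrices plus nonderogatoriness (this needs $\lambda^2_\alpha\neq 0$, i.e.\ a genuine Jordan block, which is the paper's standing parabolicity assumption and should be said explicitly) yields $g_{\alpha\alpha}=\Sigma\, c(J_\alpha)$ with $\Sigma$ the exchange matrix, i.e.\ the symmetric anti-triangular Hankel form that the paper exhibits for $n=2$. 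However, the proposal then stops precisely at the step that constitutes the theorem: you never verify that the second Tsarev condition isolates $\partial\lambda^1_\alpha/\partial u^1_\alpha$; you only assert that it ``should'' happen, and you flag this yourself as the main obstacle. Since everything before that point is linear algebra about $g$, and the conclusion $\partial\lambda^1_\alpha/\partial u^1_\alpha=0$ is the whole content, what you have written is a plan, not a proof.

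The gap is genuinely fillable, and more easily than your Codazzi reformulation suggests; in fact lowering the index to $C_{ij}=g_{is}A^s_j$ is a detour, because $\nabla C$ reintroduces derivatives of $g$ that the mixed-index condition avoids. Work directly with $\nabla_iA^j_k=\nabla_kA^j_i$ and choose, inside block $\alpha$ with local indices $1,\dots,k$, the configuration $\nabla_1A^k_k=\nabla_kA^k_1$. Since the last row of a Toeplitz block is $A^k_s=\lambda^1\delta^k_s$ and its first column is $A^s_1=\lambda^1\delta^s_1$, one finds $\nabla_kA^k_1=0$ identically, while
\begin{equation*}
\nabla_1A^k_k=\frac{\partial\lambda^1_\alpha}{\partial u^1_\alpha}+\sum_{s<k}\Gamma^k_{1s}\,\lambda^{k-s+1}_\alpha .
\end{equation*}
The block-diagonal anti-triangular Hankel form now kills the Christoffel terms with no appeal to flatness: $g^{kr}=0$ for local $r\geq 2$ and $g_{11}=0$ identically, hence
\begin{equation*}
\Gamma^k_{1s}=\tfrac12\,g^{k1}\bigl(\partial_1 g_{1s}+\partial_s g_{11}-\partial_1 g_{1s}\bigr)=\tfrac12\,g^{k1}\,\partial_s g_{11}=0 ,
\end{equation*}
so the condition collapses to $\partial\lambda^1_\alpha/\partial u^1_\alpha=0$ for every block, and Theorem \ref{thm0} finishes the argument. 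This is precisely the paper's $2\times2$ computation, where $\nabla_1A^2_2=\nabla_2A^2_1$ reduces to $\lambda_{u^1}+\Gamma^2_{11}\mu=0$ with $\Gamma^2_{11}=0$ forced by the Hankel metric. Finally, your genericity worry is a red herring: Tsarev's conditions hold identically in $\mathbf{u}$, so the relation derived pointwise is automatically a functional identity; no functional independence of the $\lambda^i_\alpha$ is used anywhere, only nondegeneracy of $g$ and $\lambda^2_\alpha\neq0$.
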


\bigskip

Then, specifying Tsarev's conditions for $2\times 2$
Jordan-block systems we obtain the following necessary and sufficient condition on the Hamiltonianity:

\bigskip

\begin{theorem}
    For $2\times 2$ hydrodynamic-type systems in Jordan-block form, the following conditions are equivalent
    \begin{itemize}
        \item[(\text{i})] The system is linearly degenerate;
         \item[(ii)] The system admits quasilinear constraints $u^2_x=\varphi^1(t,x,\textbf{u})$;
        \item[(iii)] The systems admits Hamiltonian formalism in Dubrovin-Novikov sense.
       
    \end{itemize}
\end{theorem}
\begin{proof}
The equivalence of conditions (i) and (ii) is proved in Theorem \ref{thm3}. 

Let us now investigate the equivalence of (i) and (iii). At first, from the first set of Tsarev's conditions \eqref{tsa} we get that $g$ is a metric in Hankel form:
\begin{equation*}
    g_{ij}=\begin{pmatrix}
        0&g_{12}(u^1,u^2)\\g_{12}(u^1,u^2)&g_{22}(u^1,u^2)
    \end{pmatrix}
\end{equation*}
Whereas, the second set reads as follows
\begin{equation*}\label{qst}
\nabla_1A^2_1-\nabla_2A^2_2=\frac{\partial \lambda}{\partial u^1}\qquad \nabla_1A^1_2-\nabla_2A^1_1=\frac{\partial \mu}{\partial u^1}+\Gamma^{1}_{11}\, \mu-\frac{\partial \lambda}{\partial u^2}.
\end{equation*}Note that these are the only non trivial conditions. Then, by applying the formula for Christoffel symbols:
{\begin{align*}
g_{12}(u^1,u^2)&=f^0(u^2)e^{-\displaystyle\int{\frac{\frac{\partial \mu}{\partial u^1}-\frac{\partial \lambda}{\partial u^2}}{\mu}du^1}}\\&= f^0(u^2)\left(e^{-\displaystyle \int{\frac{\frac{\partial \mu}{\partial u^1}}{\mu}\, du^1}}e^{\displaystyle\int{\frac{\frac{\partial \lambda}{\partial u^2}}{\mu}\, du^1}}\right)\\
&=\frac{f^1(u^2)}{\mu(u^1,u^2)}e^{\displaystyle\int{\frac{\frac{\partial \lambda}{\partial u^2}}{\mu}\, du^1}},
\end{align*}}
where $f^0(u^2)$, $\eta(u^2)$ are arbitrary functions, and $f^1(u^2)=f^0(u^2)e^{\eta(u^2)}$.

Finally, defining $g_{22}(u^1,u^2)=0$ then we obtain exactly a flat metric and Tsarev conditions are satisfied if and only if $\lambda$ does not depend on $u^1$, i.e. if and only if the system is linearly degenerate (Theorem \ref{thm0}).
\end{proof}

\bigskip

The previous Theorem leads to a deeper consideration on the relation between the existence of the Hamiltonian structure of parabolic systems and the existence of quasilinear compatible differential constraints. Indeed, one can directly obtain that for linear degenerate systems of Jordan-block type in $2$ components the function
\begin{equation*}
    \theta(u^1,u^2)=f(u^1)e^{\displaystyle \int{\frac{\frac{\partial \lambda}{\partial u^2}}{\mu}\, du^1}},
\end{equation*}
uniquely defines both the Hamiltonian structure with a non-degenerate flat metric $$g=\frac{\partial \theta}{\partial u^1}\, (du^1\otimes du^2+du^2\otimes du^1)$$ and the compatible quasilinear constraint by $u^2_x=\theta(u^1,u^2)$. We plan to focus our attention to the investigation of such an interesting property in another future work. 

\section{Conclusions}

In this paper, we applied the method of differential constraints to parabolic quasilinear systems of first order PDEs with a upper triangular Toeplitz form. In particular, we investigated the connections between quasilinear differential constraints and linearly degenerate systems and we focused on solutions of some systems coming from physical theories. We list here briefly our main results:
\begin{itemize}
    \item We proved that for systems of $2$ components the only differential constraints compatible with the systems are quasilinear. In addition, we compared the Hamiltonian property for $2\times 2$ systems to the existence of quasilinear constraints, proving the equivalence of the properties;
    \item We showed that linear degeneracy is a necessary condition for systems in higher number of components to admit quasilinear differential constraints;
    \item We mapped every system in $2$ components into one we are able to integrate and find solutions depending on $2$ arbitrary functions, i.e. finding the most general integral;
    \item We solved two examples arising from the theory of WDVV equations and principal hierarchies from Frobenius manifolds. Moreover, we found a solution of the El's kinetic equation in the hard rod case under delta-functional ansatz, by a reduction of the system into a hyperbolic one.
\end{itemize}

Several open questions arise from the present results. As an example, we wonder if a direct connection between the existence of quasilinear differential constraints and the Hamiltonian property exists. In addition, it would be of interest the study of differential constraints which are genuinely nonlinear and their geometric interpretation in the contest of hydrodynamic systems. Finally, the authors will devote a future work on the integration of other examples of kinetic equations under delta-functional reduction, wondering if such a procedure gives explicit solutions also in other cases of soliton gas equations (as the KdV, the sinh-Gordon or the DNLS ones).

\bigskip

{ \subsection*{Acknowledgements} The authors are extremely thankful to N. Manganaro for the possibility to work on this problem, for his comments and suggestions and for his concrete help. We also thank E.V. Ferapontov and T. Congy for stimulating discussions and interesting remarks. We finally acknowledge the financial support of GNFM of the Istituto Nazionale di Alta Matematica.  PV is partially funded by the research project Mathematical Methods in Non- Linear Physics (MMNLP) by the Commissione Scientifica Nazionale – Gruppo 4 – Fisica Teorica of the Istituto Nazionale di Fisica Nucleare (INFN) and by "Borse per viaggi all'estero" of the Istituto Nazionale di Alta Matematica, which permitted to visit the Geometry and Physics group of Loughborough University.}

\bigskip

\end{document}